\newtheorem{theorem}{Theorem}
\newtheorem{assumption}[theorem]{Assumption}
\newtheorem{properties}[theorem]{Properties}
\newtheorem{definition}[theorem]{Definition}
\newtheorem{lemma}[theorem]{Lemma}
\newtheorem{proposition}[theorem]{Proposition}
\newtheorem{remark}[theorem]{Remark}
\newcommand{\bi}{\begin{itemize}}
\newcommand{\ei}{\end{itemize}}
\newcommand{\bd}{\begin{displaymath}}
\newcommand{\ed}{\end{displaymath}}
\newcommand{\be}{\begin{eqnarray*}}
\newcommand{\ee}{\end{eqnarray*}}
\newcommand{\T}{{\cal T}}
\title{\LARGE \bf
Information Transfer in Dynamical Systems and Optimal Placement of Actuators and Sensors for Control of Non-equilibrium Dynamics}
\author{Subhrajit Sinha, Umesh Vaidya and Enoch Yeung\\
\thanks{S. Sinha is with Pacific Northwest National Laboratory, Richland, Washington. U. Vaidya is with the Department of Electrical Engineering at Iowa State University, Ames, Iowa. E. Yeung is with Department of Mechanical Engineering, University of California, Santa Barbara, California.  \tt \small {email : subhrajit.sinha@pnnl.gov}
}
}
\begin{document}
\maketitle

\begin{abstract}
In this paper we develop the concept of information transfer between the Borel-measurable sets for a dynamical system described by a measurable space and a non-singular transformation. The concept is based on how Shannon entropy is transferred between the measurable sets, as the dynamical system evolves. We show that the proposed definition of information transfer satisfies the usual notions of information transfer and causality, namely, zero transfer and transfer asymmetry. Furthermore, we show how the information transfer measure can be used to classify ergodicity and mixing. We also develop the computational methods for information transfer computation and apply the framework for optimal placements of actuators and sensors for control of non-equilibrium dynamics.
\end{abstract}

\section{Introduction}
A mathematical study of dynamical system has a rich history, beginning with the work of Newton. A dynamical system is usually defined on a manifold $M$ with a family of smooth functions $\phi_t:M\to M$, where $t$ is time, generally belonging to a monoid $\T$ ($\mathbb{R}_{\geq 0}$ or $\mathbb{Z}_{\geq 0}$). The maps $\phi_t$ are such that they preserve the monoid structure, that is $\phi_t\circ \phi_s=\phi_{t+s}$ \cite{Arnold_classical_mechanics}. In another approach, a dynamical system is defined on a measure space and time evolution is modelled through a collection of measurable functions which preserve the monoid structure of $\T$ \cite{Lasota, Katok}.
In recent times, operator theoretic ideas are increasingly being used for analysis of nonlinear dynamical systems \cite{Dellnitz_Junge,Mezic2000,froyland_extracting,Junge_Osinga,Mezic_comparison,
Dellnitztransport,mezic2005spectral,Mehta_comparsion_cdc,Vaidya_TAC,
raghunathan2014optimal,susuki2011nonlinear,mezic_koopmanism,
mezic_koopman_stability,surana_observer,yeung2018koopman,yeung2017learning}. The main advantage of this approach is the fact that these operators, namely Perron-Frobenius (P-F) operator and Koopman operators are linear operators even though the underlying system may be nonlinear. Moreover, the operators are positive Markov operators and these properties can be exploited to have probabilistic interpretations.

On the other hand, the concept of entropy was first introduced by Clausius and later in the $19^{th}$ century Ludwig Boltzmann interpreted entropy as a measure of disorder. In particular, Boltzmann entropy is given by the famous relation 
\[S=k_B\ln W\]
where $S$ is entropy, $k_B$ is the Boltzmann constant and W is the number of microstates associated with a given macrostate of a system. Claude Shannon generalized this notion notion of entropy when he provided his information-theoretic version of entropy \cite{Shannon} and defined entropy for a probability density. Shannon's information theory is symmetric and in \cite{Marko} Shannon's information theory was generalized to give a sense of direction when the author defined bi-directional information. The idea was taken forward by Massey and Kramer \cite{IT_kramer_directedit, IT_massey_directed} and defined directed information. 

In dynamical system setting Liang and Kleeman introduced the concept of information transfer \cite{liang_kleeman_prl} which capture zero transfer and is inherently asymmetric. However, Liang-Kleeman information transfer is not designed to capture indirect influence. This was addressed in \cite{sinha_IT_CDC} and in \cite{sinha_IT_CDC,sinha_IT_ICC} the authors provided a new definition of information transfer between the state(s) of a dynamical system and it was shown that the information transfer thus defined captures the intuitions of causality, namely zero transfer, transfer asymmetry and also clearly distinguishes between direct and indirect transfers. Furthermore, the utility of the information transfer measure for analysis of nonlinear systems was demonstrated in \cite{IT_influence_acc,sinha_IT_power_CDC2017,sinha_power_journal}, where the information measure was used for stability classification and causality analysis of power networks. Moreover, since the theory is formulated in terms of the P-F operator, it allows the data-driven computation of information transfer measure for a dynamical system \cite{sinha_IT_data_acc,sinha_IT_data_journal}.

In this paper, we extend the definition of information transfer to define information transfers between the measurable sets of a dynamical system. In particular, we consider a dynamical system as a quadruple $(X,{\cal B},\mu,T)$, where $X$ is the phase space, $\cal B$ is the Borel $\sigma$-algebra on $X$, $\mu$ a probability measure and $T:X\to X$ and define the information transfer between the sets $A_i\in {\cal B}$, with respect to the measure $\mu$ and the map $T$. A direct consequence of the definition of information content of any set in the $\sigma$-algebra is the recovery of the famous Liuoville's Theorem \cite{goldstein_classical_mechanics} for measure-preserving systems. As with the definition of information transfer between the states \cite{sinha_IT_CDC, sinha_IT_ICC}, the information transfer defined here is further generalized to define information transfer over multiple iterates of the map $T$ and  with this we provide necessary and sufficient conditions under which the dynamical system is either ergodic or mixing. We also discuss the applicability of the developed framework and provide an information transfer based solution to the problem of optimal placement of actuators and sensors for control of non-equilibrium dynamics \cite{Vaidya_gramian_journal,vaidya_optimalplacement_ecc,sinha_optimal_placement}.

The paper is organized as follows. In sections \ref{section_preliminaries} and \ref{section_IT_states} we briefly review the concepts of transfer operators and information transfer between the states in a dynamical system. In section \ref{section_IT_phase} we define the information transfer measure between the Borel measurable sets and provide the main theoretical results followed by a discussion of computational technique of information transfer in section \ref{section_finite}. In section \ref{section_optimal} we address the problem of optimal placement of actuators and sensors and provide an information transfer based solution. Finally we conclude the paper in section \ref{section_conclusion}.

\section{Preliminaries}\label{section_preliminaries}
Consider a dynamical system
\begin{eqnarray}\label{system}
x_{n+1} = T(x_n)
\end{eqnarray}
where $T:X\to X$ is in general assumed to be at least differentiable and non-singular, with $X\in\mathbb{R}^N$. The mapping $T$ is said to be non-singular with respect to a measure $m$, if $m(T^{-1}A)=0$ for all $A\in\mathcal{B}(X)$, such that $m(A)=0$. Here $\mathcal{B}(X)$ denotes the Borel $\sigma$-algebra on $X$ and let $\mathcal{M}(X)$ be the vector space of real-valued measures on ${\cal B}(X)$. 

\subsection{Transfer Operators}
Associated with the dynamical system (\ref{system}) are transfer operators, namely Perron-Frobenius (P-F) operator and Koopman operator, which are used to study the dynamical system. Instead of studying the trajectories on the state space, these operators studies the evolution of functions under the mapping $T$. Before defining the P-F and Koopman operators, we define the more general Markov operator \cite{Lasota}.
\begin{definition}[Markov Operator]
Let $(X,{\cal A}, \mu)$ be a measure space. Any linear operator $P: {\cal L}_1\to {\cal L}_1$ satisfying
\begin{itemize}
\item{$Pf\geq 0$ for $f\geq 0$, $f\in{\cal L}_1$; and}
\item{$\parallel Pf \parallel = \parallel f \parallel$, for $f\geq 0$, $f\in{\cal L}_1$}
\end{itemize}
is called a Markov operator.
\end{definition}

Peron-Frobenius operator is a particular type of Markov operator and is used for the evolution of densities. It is defined as follows:

\begin{definition}[Perron-Frobenuis Operator]
Let $(X,{\cal A}, \mu)$ be a measure space. The Perron-Frobenius operator ${\mathbb{P}}:{\cal L}_1 \to {\cal L}_1$ corresponding to the dynamical system (\ref{system}) is a Markov operator satisfying
\begin{eqnarray}
\int_A \mathbb{P} f(x) d\mu = \int_{T^{-1}(A)}f(x)d\mu
\end{eqnarray}
\end{definition}
For invertible nonsingular transformations $T$, we have
\[[\mathbb{P}f](x)=f(T^{-1}(x))|\frac{dT^{-1}(x)}{dx}|,\]
where $|\cdot |$ is the determinant.

Another operator of interest is the Koopman operator $\mathbb{U} : {\cal L}_{\infty}\to {\cal L}_{\infty}$, which is defined as follows:

\begin{definition} [Koopman Operator] 
Given any $h\in{\cal L}_{\infty}$, $\mathbb{U}:{\cal L}_{\infty}\to {\cal L}_{\infty}$ is defined by
\[[\mathbb{U} h](z)=h(T(z)).\]
\end{definition}

Both the P-F operator and the Koopman operator are linear operators, even if the underlying system is non-linear. But while analysis is made tractable by linearity, the trade-off is that these operators are typically infinite dimensional. In particular, the P-F operator and Koopman operator often will lift a dynamical system from a finite-dimensional space to generate an infinite dimensional linear system in infinite dimensions. 
\begin{properties}\label{property}
Following properties for the Koopman and Perron-Frobenius operators can be stated \cite{Lasota}.

\begin{enumerate}
\item [a).] For the Hilbert space ${\cal F}=L_2(Z,{\cal B}, \bar \mu)$ 
\begin{eqnarray*}
&&\parallel \mathbb{U}h\parallel^2=\int_Z |h(T(z))|^2d\bar \mu(z)
\nonumber\\&=&\int_Z | h(z)|^2 d\bar\mu(z)=\parallel h\parallel^2
\end{eqnarray*}
where $\bar \mu$ is an invariant measure. This implies that Koopman operator is unitary.

\item [b).] For any $h\geq 0$, $[\mathbb{U}h](z)\geq 0$ and hence Koopman is a positive operator.

\item [c).]For invertible system $T$, the P-F operator for the inverse system $T^{-1}:Z\to Z$ is given by $\mathbb{P}^*$ and $\mathbb{P}^*\mathbb{P}=\mathbb{P}\mathbb{P}^*=I$. Hence, the P-F operator is unitary.

\item [d).] If the P-F operator is defined to act on the space of densities i.e., $L_1(Z)$ and Koopman operator on space of $L_\infty(Z)$ functions, then it can be shown that the P-F and Koopman operators are dual to each other \footnote{with some abuse of notation we use the same notation for the P-F operator defined on the space of measure and densities.}
\begin{eqnarray*}
&&\left<\mathbb{U} f,g\right>=\int_Z [\mathbb{U} f](z)g(z)dx\nonumber\\&=&\int_Xf(y)g(T^{-1}(y))\left|\frac{dT^{-1}}{dy}\right|dy=\left<f,\mathbb{P} g\right>
\end{eqnarray*}
where $f\in L_{\infty}(Z)$ and $g\in L_1(Z)$ and the P-F operator on the space of densities $L_1(Z)$ is defined as follows
\[[\mathbb{P}g](z)=g(T^{-1}(z))|\frac{dT^{-1}(z)}{dz}|.\]

\item [e).] For $g(z)\geq 0$, $[\mathbb{P}g](z)\geq 0$.

\item [f).] Let $(Z,{\cal B},\mu)$ be the measure space where $\mu$ is a positive but not necessarily the invariant measure of $T:Z\to Z$, then the P-F operator $\mathbb{P}:L_1(Z,{\cal B},\mu)\to L_1(Z,{\cal B},\mu)$  satisfies  following property:

 \[\int_Z [\mathbb{P}g](z)d\mu(z)=\int_Z g(z)d\mu(x).\]\label{Markov_property}
\end{enumerate}
\end{properties}

\section{Information Transfer Between States of a Dynamical System}\label{section_IT_states}
Information flow between the states of dynamical systems is a newly developed concept \cite{sinha_IT_CDC,sinha_IT_ICC} which studies how the evolution of a state(s) of a dynamical system affect/influence any other state(s). For completeness of the paper we briefly discuss the concept of information transfer between the states of a dynamical system. 

Consider the discrete time dynamical system
\begin{eqnarray}\label{system1}
z(t+1) = T(z(t)) + \xi(t)
\end{eqnarray}
where $z = \begin{pmatrix}
x^\top & y^\top
\end{pmatrix}^\top \in \mathbb{R}^N$, $T:\mathbb{R}^N\to \mathbb{R}^N$ is assumed to be at least continuous and $\xi(t)$ is independent and identically distributed additive noise, which comes from the distribution $g$. 

Information transfer from state (subspace) $x$ to state (subspace) $y$ gives a measure of how the evolution of $x$ dynamics affect (influence) the evolution of $y$ dynamics. In particular, we quantify this influence in terms of the entropy transferred from the $x$ dynamics to the $y$ dynamics, as the system (\ref{system1}) evolves in time. Note that, by entropy we mean the Shannon entropy.

The entropy of a distribution is the measure of the information content of the distribution. Suppose there are two agents (states in the case of a dynamical system) which are interacting with each other. Each has its own entropy (information) and they \emph{transfer} a part of their own information to the other agent via the interaction. We use this intuition to define the information transfer. In particular, information transfer from $x$ to $y$ is the amount of information (entropy) of $x$ that is being transferred to $y$, the system (\ref{system1}) evolves in time. With this, we define the information transfer as follows.

\begin{definition}[Information Transfer]
The information transfer from a state $x$ to state $y$ $({\cal T}_{x\to y})$, as the dynamical system $z(t+1) = T(z(t)) + \xi(t)$
evolves from time step $t$ to time step $t+1$ is defined as
{\small
\begin{eqnarray}\label{IT_def}
{\cal T}_{x\to y} = H(y(t+1)|y(t))-H_{\not{x}}(y(t+1)|y(t))
\end{eqnarray}}
where $z = \begin{pmatrix}
x & y
\end{pmatrix}^\top$, $H(y(t+1)|y(t))$ is the entropy of $y$ at time $t+1$ conditioned on $y(t)$ and $H_{\not{x}}(y(t+1)|y(t))$ is the conditional entropy of $y(t+1)$, conditioned on $y(t)$, when $x$ is absent from the dynamics.
\end{definition}
The intuition behind the definition of information transfer is the fact that the total entropy of $y$ is the entropy of $y$ when $x$ is absent from the dynamics plus the entropy transferred from $x$ to $y$. Hence, the information transfer from $x$ to $y$ gives the amount of entropy flowing from $x$ to $y$ and thus quantifies how much the $x$ dynamics affects the $y$ dynamics. With this, we define influence in a dynamical system as follows,
\begin{definition}[Influence]\label{def_influence}
We say a state (or subspace) $x$ influences a state (or subspace) $y$ if and only if the information transfer from $x$ to $y$ is non-zero.
\end{definition}
Larger the absolute value of the $T_{x\to y}$, more is the effect of $x$ dynamics on $y$ and hence more is the influence of $x$ on $y$. 

The information transfer thus defined (\ref{IT_def}), gives the entropy transferred as the system (\ref{system1}) evolves by one time step. This gives whether the state (or subspace) $x$ affect/influence the state (or subspace) $y$ directly. The definition of information transfer can be generalized to define information transfer over multiple time steps \cite{sinha_IT_CDC} and it has been shown to capture indirect influence in a dynamical system \cite{sinha_IT_CDC}.

\section{Information Transfer Between Sets in the Phase Space}\label{section_IT_phase}
In this section, we develop the notion of information transfer between the sets of a dynamical system. 

\subsection{Shannon Entropy and Dynamical Systems}
Let $(X,{\cal B},\mu)$ be a measure space, such that $\mu(X)<\infty$. Hence, by normalizing with $\mu(X)$ one can define a probability measure on $(X,{\cal B})$ and treat $(X,{\cal B},\mu)$ as a probability space. Let $T:X\to X$ be a transformation on $X$ and $\rho$ be a probability density on $X$, such that $\int_X \rho d\mu = 1$. Then the Shannon entropy of the density $\rho$ is given by
\[H(\rho) = - \int_X\rho\log\rho d\mu.\]
The evolution of $\rho$, as the dynamical system $(X,{\cal B}, \mu, T)$ evolves under the transformation $T$ is given by the Perron-Frobenius operator, which is defined as
\begin{definition}[Perron-Frobenius Operator]\cite{Lasota}
The P-F operator $\mathbb{P}:{\cal M}(X)\to {\cal M}(X)$ is given by

{\small
\begin{eqnarray}
[\mathbb{P}\mu](A)=\int_{{\cal X} }\delta_{T(x)}(A)d\mu(x)=\mu(T^{-1}(A))
\end{eqnarray}
}
where $\delta_{T(z)}(A)$ is stochastic transition function which measure the probability that point $x$ will reach the set $A$ in one time step under the system mapping $T$ and ${\cal M}(X)$ is the space of signed measures on $X$. 
\end{definition}
With this we have
\begin{theorem}\label{theorem_entropy_increase}
Let $(X,{\cal B},\mu)$ be a finite measure space, such that $\mu(X)<\infty$ and $P:{\cal L}_1\to {\cal L}_1$ be a Markov operator. If $P$ has a constant stationary density $(P1 = 1)$, then
\[H(P\rho)\geq H(\rho).\]
\end{theorem}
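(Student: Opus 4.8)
The plan is to reduce the statement to Jensen's inequality for the Markov operator, exploiting the convexity of the map $\phi(t)=t\log t$ on $[0,\infty)$ (with the convention $\phi(0)=0$, and $\phi''(t)=1/t>0$). Since $H(\rho)=-\int_X\rho\log\rho\,d\mu=-\int_X\phi(\rho)\,d\mu$, the claim $H(P\rho)\geq H(\rho)$ is equivalent to $\int_X\phi(P\rho)\,d\mu\leq\int_X\phi(\rho)\,d\mu$, so it suffices to show that applying $P$ cannot increase the integral of $\phi(\rho)$. First I would record two structural facts: that $P\rho$ is again a probability density (from positivity together with $\|P\rho\|=\|\rho\|=1$), and that $P$ preserves integrals, i.e. $\int_X Pf\,d\mu=\int_X f\,d\mu$ for $f\in{\cal L}_1$, which follows from the norm-preservation clause in the definition of a Markov operator and is recorded as Property \ref{property}(f).

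The heart of the argument is the pointwise operator Jensen inequality
\[\phi(P\rho)\leq P\big(\phi(\rho)\big)\quad\mu\text{-a.e.}\]
To establish it I would use that a convex function is the pointwise supremum of its affine minorants: for every affine $\ell(t)=at+b$ with $\ell\leq\phi$ one has $\phi(\rho)\geq a\rho+b$ pointwise, and applying the positive linear operator $P$ together with the hypothesis $P1=1$ yields $P\big(\phi(\rho)\big)\geq a\,P\rho+b\,P1=\ell(P\rho)$. Taking the supremum over all such affine minorants reconstructs $\phi$ evaluated at $P\rho$, giving the displayed inequality. This is precisely the step where the constant stationary density assumption $P1=1$ is indispensable; without it the lower bound would carry an extra $b\,(P1-1)$ term and the averaging interpretation of $P$ would fail.

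Integrating the pointwise inequality over $X$ and invoking integral-preservation then gives
\[\int_X\phi(P\rho)\,d\mu\;\leq\;\int_X P\big(\phi(\rho)\big)\,d\mu\;=\;\int_X\phi(\rho)\,d\mu,\]
which, after multiplying by $-1$, is exactly $H(P\rho)\geq H(\rho)$.

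I expect the main obstacle to be the integrability bookkeeping rather than the inequality itself. Because $\phi$ is bounded below by $-1/e$ and $\mu(X)<\infty$, the negative part of $\phi(\rho)$ is always integrable, so the entropy is well defined (possibly $-\infty$); but $\phi(\rho)$ need not lie in ${\cal L}_1$, so I would justify applying $P$ to $\phi(\rho)$ by a truncation and monotone-convergence argument, working first with the bounded functions $\phi(\rho)\wedge n$ (or with finitely many affine minorants) and then passing to the limit. Care is also needed to pass the supremum over affine minorants through $P$, which is legitimate since $P$ is positive and hence monotone. Once these limiting steps are in place, the three displayed relations combine to yield the claim.
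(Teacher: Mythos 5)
Your proof is correct and takes essentially the same route as the paper's: both reduce the claim to the operator Jensen inequality for the Markov operator applied to $t\log t$ (the paper phrases it with the concave $\eta(f)=-f\log f$, a sign-flip of your $\phi$), then integrate and invoke integral preservation of $P$. The only difference is that you additionally justify the Jensen step via affine minorants---making explicit where $P1=1$ enters---and handle the integrability bookkeeping, details the paper's proof leaves implicit.
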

\begin{proof}
For any $f\in{\cal L}_1$, such that $f\geq 0$, define 
\[\eta(f) = - f\log f.\]
Then, $\eta$ is a concave function and hence from Jensen inequality we have
\[\eta (P\rho)\geq P\eta(\rho).\]
Integrating over the entire space $X$, we have
\begin{eqnarray}\nonumber
\int_X \eta (P\rho(x))d\mu &&\geq \int_X P\eta(\rho(x))d\mu\\
&&=\int_X \eta(\rho(x))d\mu
\end{eqnarray}
since $P$ preserves the integral. Hence, we have
\[H(P\rho)\geq H(\rho).\]
\end{proof}

Note that, for a finite measure space, when $P$ has a constant stationary density, the above theorem tells us that the entropy never decreases. Note that, this statement is similar to the second law of thermodynamics.

The above theorem states that under the transformation $T$, when the Markov operator has a constant stationary density, entropy never decreases. However, the entropy may not increase at all during iterations of $T$. This is the case when the Markov operator $P$ is the Perron-Frobenius operator $(\mathbb{P})$ and $T$ is an invertible measure preserving transformation.

\begin{definition}[Measure Preserving Transformation]
Let $(X,{\cal B},\mu)$ be a measure space and $T:X\to X$ a measurable transformation. Then $T$ is said to be measure-preserving if
\[\mu(T^{-1}(A))=\mu(A),\quad \forall A\in{\cal B}.\]
\end{definition}

\begin{theorem}
Let $(X,{\cal B},\mu)$ be a finite measure space and $T:X\to X$ be an invertible measure-preserving transformation. If $\mathbb{P}$ is the Perron-Frobenius operator corresponding to $T$, then $H(\mathbb{P}^nf)=H(f)$ for all $n$, where $f\in{\cal L}_1$ and $f\geq 0$.
\end{theorem}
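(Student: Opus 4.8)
The plan is to sandwich the entropy of $\mathbb{P}f$ between $H(f)$ from both sides, using Theorem \ref{theorem_entropy_increase} applied once to $T$ and once to its inverse. First I would observe that the constant function $1$ is a stationary density for $\mathbb{P}$: since $T$ is measure-preserving, the defining identity gives $\int_A \mathbb{P}1\,d\mu = \int_{T^{-1}(A)}1\,d\mu = \mu(T^{-1}(A)) = \mu(A) = \int_A 1\,d\mu$ for every $A\in{\cal B}$, hence $\mathbb{P}1 = 1$. With the hypothesis $P1=1$ of Theorem \ref{theorem_entropy_increase} verified for $P=\mathbb{P}$, that theorem yields the first inequality
\[
H(\mathbb{P}f)\geq H(f)\qquad\text{for all } f\in{\cal L}_1,\ f\geq 0.
\]

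For the reverse inequality I would exploit invertibility. Because $T$ is invertible and measure-preserving, $T^{-1}$ is also measure-preserving, and by Property \ref{property}(c) the Perron--Frobenius operator associated with $T^{-1}$ is precisely $\mathbb{P}^{-1}=\mathbb{P}^*$, which likewise fixes the constant density. Applying Theorem \ref{theorem_entropy_increase} to the system generated by $T^{-1}$ gives $H(\mathbb{P}^{-1}g)\geq H(g)$ for every admissible $g$. Now I would substitute $g=\mathbb{P}f$; this is legitimate because $\mathbb{P}$ is a Markov operator, so $\mathbb{P}f\geq 0$ and $\mathbb{P}f\in{\cal L}_1$ whenever $f$ is. This produces
\[
H(f)=H(\mathbb{P}^{-1}\mathbb{P}f)\geq H(\mathbb{P}f).
\]
Combining the two inequalities gives $H(\mathbb{P}f)=H(f)$, and a trivial induction (applying the one-step equality to the nonnegative ${\cal L}_1$ density $\mathbb{P}^{n-1}f$ at each stage) extends this to $H(\mathbb{P}^n f)=H(f)$ for all $n$.

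I expect the main obstacle to be the clean justification that the transfer operator of $T^{-1}$ is exactly $\mathbb{P}^{-1}$ and that it too satisfies the constant-stationary-density hypothesis; this is where invertibility is essential and is the only place the argument uses more than measure-preservation. A secondary point requiring care is checking that $\mathbb{P}f$ is a valid input to the inverse system's inequality (its nonnegativity and integrability), which follows from the Markov property of $\mathbb{P}$. As an alternative to the sandwiching argument, one could instead show directly that for an invertible measure-preserving $T$ the operator acts by composition, $\mathbb{P}f = f\circ T^{-1}$, and then compute $H(\mathbb{P}^n f) = -\int_X (f\log f)\circ T^{-n}\,d\mu = -\int_X f\log f\,d\mu = H(f)$ by the change-of-variables invariance $\int_X \varphi\circ T^{-n}\,d\mu=\int_X\varphi\,d\mu$ valid for measure-preserving maps; the sandwich route is preferable here since it reuses Theorem \ref{theorem_entropy_increase} and avoids any appeal to Jacobian factors or differentiability of $T$.
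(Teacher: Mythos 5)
Your proof is correct and follows essentially the same route as the paper: both arguments sandwich $H(\mathbb{P}f)$ by applying Theorem \ref{theorem_entropy_increase} once to $\mathbb{P}$ and once to the Perron--Frobenius operator of $T^{-1}$, using that the latter is $\mathbb{P}^{-1}$, and then iterate to get all $n$. The only difference is cosmetic: the paper identifies the inverse operator via the Jacobian formula $[\mathbb{P}f](x)=f(T^{-1}(x))$, whereas you cite Property \ref{property}(c) and explicitly verify the stationary-density hypothesis $\mathbb{P}1=1$ measure-theoretically, which is if anything a bit more careful than the paper's own write-up.
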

\begin{proof}
For $f\in{\cal L}_1$, we have
\[[\mathbb{P}f](x)=f(T^{-1}(x))|\frac{dT^{-1}(x)}{dx}|.\]
Since, $T$ is invertible and measure-preserving, $|\frac{dT^{-1}(x)}{dx}|=1$. Hence,
\[[\mathbb{P}f](x)=f(T^{-1}(x)).\]
Let $\mathbb{P}_1$ be the P-F operator corresponding to $T^{-1}$. Then by similar arguments, $[\mathbb{P}_1f](x)=f(T(x))$. Hence, $\mathbb{P}_1\mathbb{P}f = \mathbb{P}\mathbb{P}_1f$. Thus, 
\[\mathbb{P}_1=\mathbb{P}^{-1}.\]
Hence, from Theorem \ref{theorem_entropy_increase},
\[H(\mathbb{P}_1\mathbb{P}f)\geq H(\mathbb{P}f)\geq H(f).\]
However, since $\mathbb{P}_1\mathbb{P}f = f$, we have $H(\mathbb{P}f)=H(f)$. Hence, 
\[H(\mathbb{P}^nf)=H(f), \quad \forall n.\]
\end{proof}

Hence, for invertible measure-preserving dynamical systems, the entropy remains constant. An example is the Hamiltonian systems. However, for Hamiltonian systems, one has the celebrated Liouville's Theorem \cite{goldstein_classical_mechanics} which proves that the volume of any set in the phase space is conserved. In order to state the theorem in terms of entropy, it is not sufficient to consider the entropy over the entire phase space $X$, but one has to define the entropy of any measurable set $A\subset {\cal B}$. 

This assignment of Shannon entropy to any measurable subset of the phase space $X$ will form the basis of the definition of information flow between the sets in the phase space of a dynamical system and this will be the main focus in the next subsection.

\subsection{Entropy and Information Flow}
In this subsection we define the information transfer between the sets in the phase space of a dynamical system. Let $(X,{\cal B}, \mu)$ be a probability space. Then the Shannon entropy of any measurable set can be defined as follows:

\begin{definition}
The Shannon entropy of any set $A\in{\cal B}$ with respect to the measure $\mu$ is defined as
\[H_{\mu}(A)= - \mu(A)\log \mu(A).\]
If $\mu(A)$ is zero, then $H_{\mu}(A)\coloneqq 0$.
\end{definition}

\begin{lemma}
Let $(X,{\cal B},\mu)$ be a probability space and $T:X\to X$ be an invertible measure-preserving transformation. Let $A\in {\cal B}$, then the Shannon entropy of $A$ remains constant under the action of the transformation $T$.
\end{lemma}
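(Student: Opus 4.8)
The plan is to observe that the Shannon entropy of a set, as just defined, depends on that set only through its measure: $H_\mu(A) = -\mu(A)\log\mu(A)$. Consequently, to prove that the entropy of $A$ is preserved, I need only show that the measure assigned to $A$ is unchanged as the system evolves. The one conceptual point to settle first is what ``the action of $T$'' means at the level of sets; I would phrase it through the Perron-Frobenius operator introduced above, which evolves the measure $\mu$ to $\mathbb{P}\mu$ according to $[\mathbb{P}\mu](A) = \mu(T^{-1}(A))$.

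First I would compute the entropy of $A$ with respect to the evolved measure $\mathbb{P}\mu$. By the definition of $H$, this is $H_{\mathbb{P}\mu}(A) = -[\mathbb{P}\mu](A)\log[\mathbb{P}\mu](A)$, and substituting the P-F expression gives $H_{\mathbb{P}\mu}(A) = -\mu(T^{-1}(A))\log\mu(T^{-1}(A))$.

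The key step is then to invoke the measure-preserving hypothesis $\mu(T^{-1}(A)) = \mu(A)$, valid for every $A\in{\cal B}$. Plugging this in immediately yields $H_{\mathbb{P}\mu}(A) = -\mu(A)\log\mu(A) = H_\mu(A)$, which is the claim. Iterating the same identity gives $H_{\mathbb{P}^n\mu}(A) = H_\mu(A)$ for all $n$, so the entropy of $A$ stays constant along the orbit of the dynamics.

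I do not expect a genuine obstacle here: once the entropy of a set is defined purely through its measure, measure-preservation does all the work and the conclusion is essentially immediate. The only two points deserving a word of care are (i) making explicit that entropy enters only via the measure of the set, and (ii) the role of invertibility. Invertibility is not needed for the pushforward computation above, but it is what guarantees that the forward image $T(A)$ is measurable, so that the equivalent reading ``the set $A$ evolves to $T(A)$ and $H_\mu(T(A)) = H_\mu(A)$'' is also well posed; there one writes $\mu(T(A)) = \mu(T^{-1}(T(A))) = \mu(A)$, using invertibility to identify $T^{-1}(T(A))$ with $A$ together with measure-preservation.
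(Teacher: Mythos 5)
Your proposal is correct, and its mathematical core is the same one-line argument as the paper's: the entropy of a set depends only on its measure, and measure-preservation fixes that measure, so there is nothing left to prove. The difference is purely in framing. The paper evolves the \emph{set}: it asserts $\mu(T^n(A)) = \mu(A)$ for all $n \in \mathbb{Z}$ and computes $H_\mu(T^n(A)) = H_\mu(A)$ directly, which tacitly uses invertibility both for the measurability of the forward images $T^n(A)$ and to get measure-preservation of forward images from the defining condition $\mu(T^{-1}(B)) = \mu(B)$. You instead evolve the \emph{measure} via the Perron--Frobenius operator, computing $H_{\mathbb{P}^n\mu}(A) = -\mu(T^{-n}(A))\log\mu(T^{-n}(A)) = H_\mu(A)$; since $H_{\mathbb{P}^n\mu}(A) = H_\mu(T^{-n}(A))$, this is literally the paper's family of identities re-indexed by $-n$, so the two readings coincide. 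What your version buys is a clean separation of hypotheses: the P-F route needs only preimages and hence no invertibility, and your closing remark isolates exactly where invertibility enters (measurability of $T(A)$ and the identification $T^{-1}(T(A)) = A$ for the forward-image reading) --- a point the paper's proof uses but leaves implicit.
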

\begin{proof}
Since $T$ is an invertible measure-preserving transformation on $X$, we have
\[\mu(T^n(A)) = \mu(A), \forall n\in\mathbb{Z}.\]
Hence for $n\in\mathbb{Z}$,
\begin{eqnarray*}
H_{\mu}(T^n(A)) &=& -\mu(T^n(A))\log\mu(T^n(A))\\
&=& -\mu(A)\log\mu(A) = H_{\mu}(A).
\end{eqnarray*}
Hence the entropy of any measurable set remains constant under the transformation $T$.
\end{proof}
The above lemma gives a conservation law for the Shannon entropy and in the framework of Hamiltonian systems is the Liouville's Theorem \cite{goldstein_classical_mechanics}, which states that phase space volumes are conserved under a Hamiltonian flow.

With the Shannon entropy defined for any measurable set $A\in {\cal B}$, with respect to a measure $\mu$, now we are in a position to define the information flow among any two sets measurable sets $A_i,A_j\subset {\cal B}$.

\begin{figure}[htp!]
\centering
\includegraphics[scale=.175]{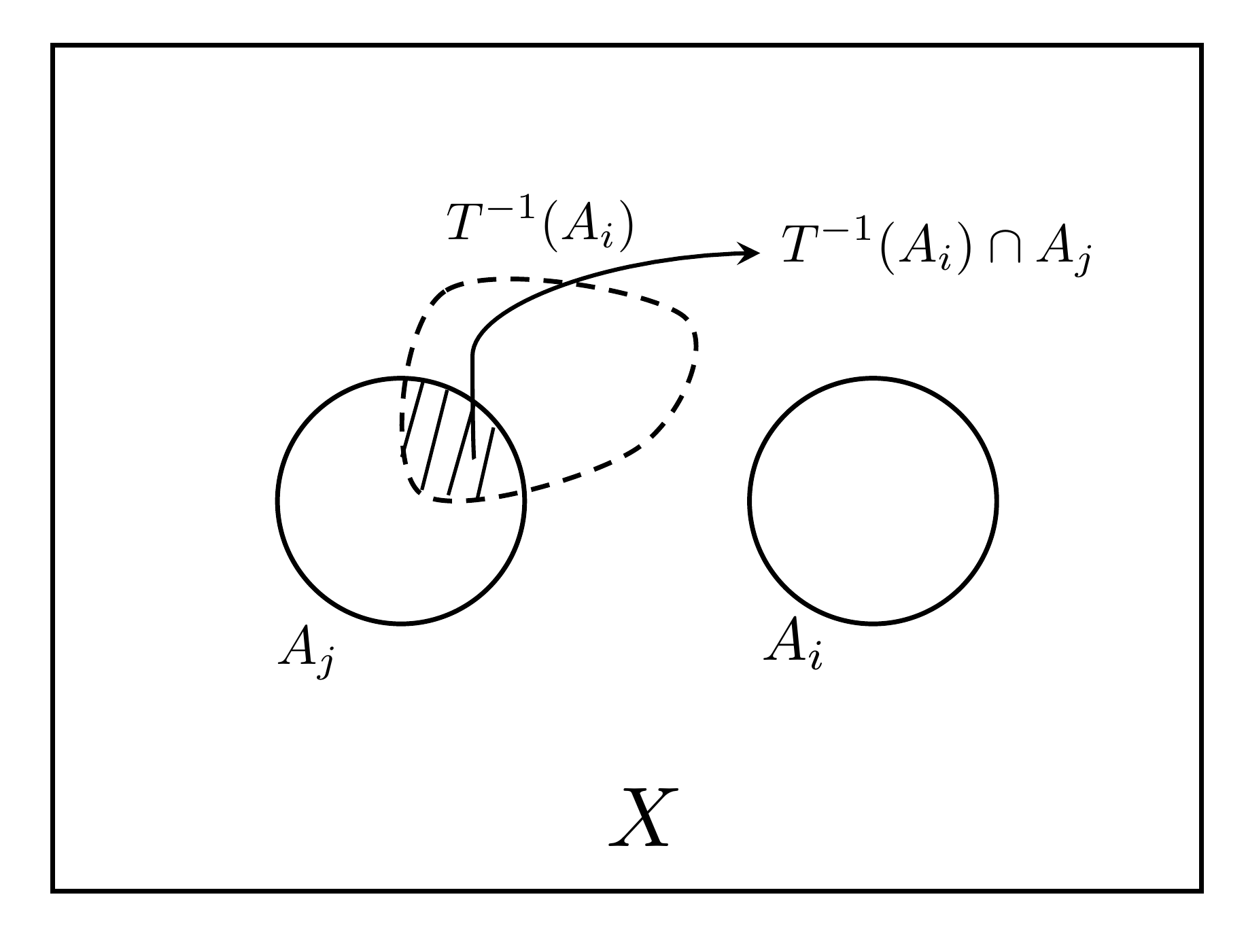}
\caption{Schematic showing intuition of information transfer.}\label{fig_IT_intuition}
\end{figure}

\begin{definition}[Information Transfer]\label{def_IT_set}
Let $(X,{\cal B}, \mu)$ be a probability space and let $T:X\to X$ be a non-singular transformation. Let $A_i,A_j\subset {\cal B}$ such that $\mu(A_i)>0$. Define $\mu_{ij}$ as
\begin{eqnarray}\label{mu_ij}
\mu_{ij} = \frac{\mu(T^{-1}(A_j)\cap A_i)}{\mu(A_i)}.
\end{eqnarray}
Then the information transferred from set $A_i$ to $A_j$, with respect to the measure $\mu$, under one iterate of the transformation $T$ is
\begin{eqnarray}\label{def_IT_set_expression}
{\cal T}_{i\to j}^{\mu} = -\mu_{ij}\log\mu_{ij}.
\end{eqnarray}
If $\mu(A_i)=0$, then ${\cal T}_{i\to j}^{\mu}\coloneqq 0$.
\end{definition}
For invertible transformations $T$, the quantity $\mu_{ij}$ can be equivalently defined as
\[\mu_{ij} = \frac{\mu(T(A_i)\cap A_j)}{\mu(A_i)}.\]
Intuitively, the quantity $\mu_{ij}$ quantifies the fraction of the volume of $A_i$ that ends up in the set $A_j$ under one iterate of the transformation $T$ (see Fig. \ref{fig_IT_intuition}) and information transfer, thus defined in definition \ref{def_IT_set}, gives the information flow from $A_i$ to $A_j$ under one iterate of the transformation $T$. Hence this can be thought of as one-step information transfer. 
\begin{proposition}
Let $(X,{\cal B},\mu)$ be a probability space and $T:X\to X$ a non-singular transformation. Then the information transfer defined in (\ref{def_IT_set_expression}) satisfies the following:
\begin{enumerate}
    \item {${\cal T}_{i\to j}^{\mu}\geq 0.$ for $A_i,A_j\in {\cal B}$ such that $\mu(A_i)>0$ and $\mu(A_j)>0$. Moreover, if $\mu(A_j)$ is zero, then the information transfers ${\cal T}_{i\to j}^{\mu}$ is zero.}
    \item{Information transfer, in general, is asymmetric.}
\end{enumerate}
\end{proposition}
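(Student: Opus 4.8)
The plan is to handle the two assertions separately. The first --- non-negativity, together with the vanishing of ${\cal T}_{i\to j}^{\mu}$ when $\mu(A_j)=0$ --- reduces entirely to elementary properties of the scalar $\mu_{ij}$ and of the function $\eta(x)=-x\log x$. The second, asymmetry, is a negative statement (the transfer is \emph{not} symmetric in general), so it is settled by exhibiting a single explicit counterexample rather than by a general argument.

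For non-negativity I would first show $\mu_{ij}\in[0,1]$ whenever $\mu(A_i)>0$. The numerator $\mu(T^{-1}(A_j)\cap A_i)$ is non-negative because $\mu$ is a measure, and since $T^{-1}(A_j)\cap A_i\subseteq A_i$ monotonicity of $\mu$ gives $\mu(T^{-1}(A_j)\cap A_i)\leq\mu(A_i)$; dividing by $\mu(A_i)$ yields $0\leq\mu_{ij}\leq 1$. On this interval $\eta(x)=-x\log x\geq 0$, since $\log x\leq 0$ for $x\in(0,1]$ and $\eta(0)=0$ by convention, whence ${\cal T}_{i\to j}^{\mu}=\eta(\mu_{ij})\geq 0$. (Only $\mu(A_i)>0$ is actually used; the hypothesis $\mu(A_j)>0$ is not needed for non-negativity.) For the ``moreover'' clause I would invoke non-singularity of $T$: if $\mu(A_j)=0$ then $\mu(T^{-1}(A_j))=0$, so
\[
0\leq\mu\bigl(T^{-1}(A_j)\cap A_i\bigr)\leq\mu\bigl(T^{-1}(A_j)\bigr)=0,
\]
giving $\mu_{ij}=0$ and therefore ${\cal T}_{i\to j}^{\mu}=\eta(0)=0$.

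For asymmetry it suffices to produce one probability space, one non-singular $T$, and one pair $A_i,A_j$ with ${\cal T}_{i\to j}^{\mu}\neq{\cal T}_{j\to i}^{\mu}$; the phenomenon is structural, since $\mu_{ij}$ normalizes by $\mu(A_i)$ while $\mu_{ji}$ normalizes by $\mu(A_j)$. I would take $X=[0,1)$ with Lebesgue measure, $T(x)=2x\bmod 1$ the doubling map, and the nested sets $A_i=[0,\tfrac12)$, $A_j=[0,\tfrac14)$. A short computation of the preimages gives $\mu(T^{-1}(A_j)\cap A_i)=\tfrac18$, hence $\mu_{ij}=\tfrac14$ and ${\cal T}_{i\to j}^{\mu}=\tfrac12\log 2>0$, whereas $\mu(T^{-1}(A_i)\cap A_j)=\tfrac14$, hence $\mu_{ji}=1$ and ${\cal T}_{j\to i}^{\mu}=0$; the two disagree. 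None of the steps is genuinely hard; the only place requiring care is this second claim, where the obstacle is merely to choose a counterexample whose preimages are transparent --- the nested-interval choice under the doubling map does exactly that, the nesting being what forces $\mu_{ji}=1$ and hence a vanishing reverse transfer.
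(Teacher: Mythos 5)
Your proof is correct, and it diverges from the paper's in an instructive way. On part 1 you follow essentially the same route as the paper: both arguments obtain non-negativity from $\mu_{ij}\in[0,1]$ together with the sign of $-x\log x$ on that interval, and both handle the $\mu(A_j)=0$ clause identically, via non-singularity and monotonicity, $\mu(T^{-1}(A_j)\cap A_i)\leq\mu(T^{-1}(A_j))=0$; you merely spell out what the paper compresses into ``follows directly from the definition of Shannon entropy,'' and you correctly note that the hypothesis $\mu(A_j)>0$ is never used for non-negativity. The genuine difference is in part 2. The paper exhibits no system at all: it argues asymmetry by describing a hypothetical configuration in which $T^{-1}(A_j)\cap A_i$ has measure zero while $T^{-1}(A_i)\cap A_j$ has positive measure, so that one transfer vanishes and the other (generically) does not. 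You instead produce a fully explicit, verifiable counterexample --- the doubling map with nested intervals $A_i=[0,\tfrac12)$, $A_j=[0,\tfrac14)$ --- and compute both quantities: ${\cal T}_{i\to j}^{\mu}=\tfrac12\log 2$ versus ${\cal T}_{j\to i}^{\mu}=0$ (your preimage computations check out). This buys rigor, since the paper only asserts that such a configuration ``may happen,'' and it realizes asymmetry through the opposite mechanism: your reverse transfer vanishes not because the overlap is null but because $\mu_{ji}=1$ and $-1\cdot\log 1=0$. Notably, that same computation refutes an incidental claim inside the paper's own proof of part 1, namely that ${\cal T}_{i\to j}^{\mu}=0$ \emph{if and only if} $\mu(T^{-1}(A_j)\cap A_i)=0$: in your example the overlap $T^{-1}(A_i)\cap A_j$ has measure $\tfrac14>0$ yet the corresponding transfer is zero. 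So your argument is both a complete proof of asymmetry and a corrective to the paper's stated equivalence.
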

\begin{proof}
1) Consider $A_i,A_j\in {\cal B}$ such that $\mu(A_i)>0$ and $\mu(A_j)>0$. Then the non-negativity of the information transfer follows directly from the definition of Shannon entropy and ${\cal T}_{i\to j}^{\mu}= 0$ if and only if $\mu(T^{-1}(A_j)\cap A_i)=0.$ Moreover, let $\mu(A_j)=0$. Since $T$ is non-singular $\mu(T^{-1}(A_j))=0$. And since $0\leq\mu(A)\leq 1$ for $A\in {\cal B}$ and $\mu(T^{-1}(A_j)\cap A_i)\leq \mu(T^{-1}(A_j))$, we have \[\mu(T^{-1}(A_j)\cap A_i)=0\]
and hence ${\cal T}_{i\to j}^{\mu}= 0$.

2) In a system, it may happen that $T^{-1}(A_j)\cap A_i$ is a measure zero set, but $T^{-1}(A_i)\cap A_j = A_k$, such that $\mu(A_k)>0$. In this case ${\cal T}_{i\to j}^{\mu}= 0$ but ${\cal T}_{j\to i}^{\mu}\geq 0$. Hence, information transfer in general is not symmetric.
\end{proof}

Unlike mutual information between two measurable sets, which is symmetric and quantifies the information shared by two sets, it is the asymmetry property of information transfer which allows us to characterize influence of one measurable set on any other measurable set. In \cite{sinha_IT_CDC}, the one-step information transfer between the states in a dynamical system was generalized to define information transfer over multiple time steps. In this set-theoretic setting this amounts to defining information transfer from $A_i$ to $A_j$ under the transformation $T^n$, for $n\in\mathbb{N}$.

\begin{definition}[n-step Information Transfer]\label{def_IT_set_n_step}
Let $(X,{\cal B}, \mu)$ be a probability space and let $T:X\to X$ be a non-singular transformation. Let $A_i,A_j\subset {\cal B}$. Define $\mu_{ij}^n$ as
\[\mu_{ij}^n = \frac{\mu(T^{-n}(A_j)\cap A_i)}{\mu(A_i)}.\]
Then the information transferred from set $A_i$ to $A_j$, with respect to the measure $\mu$, under $n$ iterates of the transformation $T$ is
\begin{eqnarray}\label{def_IT_set_expression_n_step}
[{\cal T}_{i\to j}^{\mu}]^n = -\mu_{ij}^n\log\mu_{ij}^n.
\end{eqnarray}
\end{definition}

Hence, the total information transferred from $A_i$ to $A_j$ as the dynamical system $(X,{\cal B},\mu, T)$ evolves $n$ times is
\begin{eqnarray}\label{IT_total}
[\T_{ij}^\mu]_1^n = \sum_{i = 1}^n [\T_{i\to j}^\mu]^i.
\end{eqnarray}

\subsection{Ergodicity, Mixing and Information Flow}
The term ``\emph{ergodic}" was first used by Ludwig Boltzmann in relation to problems in statistical mechanics. 
\begin{definition}[Ergodicity \cite{Lasota}]
Let $(X,{\cal B},\mu)$ be a measure space and $T:X\to X$ a non-singular transformation. Then $T$ is called ergodic if every invariant set $A\in{\cal B}$ is such that $\mu(A)=0$ or $\mu(X\setminus A) = 0$; that is, $T$ is ergodic if all invariant sets are trivial subsets of $X$.
\end{definition}

The above is one of the many equivalent definitions of ergodicity and depending on the underlying system and the problem in hand a particular equivalent formulation of ergodicity is used. The equivalent definition of ergodicity relevant to the current study is the following:
\begin{definition}[Ergodicity]
Let $(X,{\cal B}, \mu)$ be a measure space and let $T:X\to X$ be a non-singular transformation. Then $T$ is ergodic if for any two measurable sets $A,B\in{\cal B}$ with $\mu(A)>0$ and $\mu(B)>0$, there is an $n\in\mathbb{Z}$ such that $\mu(T^{-n}(A)\cap B)>0$.
\end{definition}

\begin{theorem}
Let $(X,{\cal B},\mu)$ be a probability space and $T:X\to X$ be a non-singular measure-preserving transformation. Then $T$ is ergodic if and only if for any two sets $A,B\in{\cal B}$, with $\mu(A)>0$ and $\mu(B)>0$ the total information transferred (\ref{IT_total}) from $A$ to $B$ ($[\T_{AB}^\mu]_1^n$) is non-zero.
\end{theorem}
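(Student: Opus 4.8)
The plan is to reduce the statement to the ``recurrence'' form of ergodicity recorded just above the theorem, after isolating when the total transfer vanishes. Since $0\le\mu_{AB}^k\le1$, every summand $[\T_{A\to B}^\mu]^k=-\mu_{AB}^k\log\mu_{AB}^k$ is non-negative, so $[\T_{AB}^\mu]_1^n$ is non-negative and non-decreasing in $n$. Thus the total transfer is zero for all $n$ exactly when $\mu_{AB}^k\in\{0,1\}$ for every $k$, and it is non-zero (for some $n$, which is how I read the claim) exactly when there is a $k\ge1$ with $0<\mu_{AB}^k<1$, i.e. with $0<\mu(T^{-k}(B)\cap A)<\mu(A)$. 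The proof then splits into the two implications, the heart being the correspondence between a strictly interior overlap and the positivity clause $\mu(T^{-n}(B)\cap A)>0$ appearing in the definition of ergodicity.

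For the reverse implication I would argue by contraposition. If $T$ is not ergodic, the first definition yields an invariant set $E$ with $0<\mu(E)<1$, so $T^{-k}(E)=E$ for all $k$; its complement is likewise invariant. Taking $A=E$ and $B=X\setminus E$ (both of positive measure) gives $T^{-k}(B)\cap A=\emptyset$, hence $\mu_{AB}^k=0$ and $[\T_{AB}^\mu]_1^n=0$ for every $n$. This produces a positive-measure pair carrying zero total transfer, negating the right-hand side.

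For the forward implication, assume $T$ ergodic and fix $A,B$ of positive measure. The definition only guarantees some $n\in\mathbb{Z}$ with positive overlap, whereas the total sum runs over positive iterates; I would bridge this with the standard invariance trick. Setting $W=\bigcup_{k\ge1}T^{-k}(A)$, one checks $T^{-1}(W)\subseteq W$, and since $T$ is measure-preserving $\mu(W\setminus T^{-1}(W))=0$, so $W$ is invariant modulo null sets with $\mu(W)\ge\mu(A)>0$; ergodicity then forces $\mu(W)=1$. Intersecting with $B$ and running the same construction with the roles of $A$ and $B$ exchanged gives a $k\ge1$ with $\mu(T^{-k}(B)\cap A)>0$, i.e. $\mu_{AB}^k>0$. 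This is exactly the place where the measure-preserving hypothesis is used.

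The hard part is the strict upper bound $\mu_{AB}^k<1$, since $-\mu\log\mu$ vanishes at $\mu=1$ as well as at $\mu=0$: a persistent full overlap $A\subseteq T^{-k}(B)$ contributes nothing. The case $\mu(B)=1$ is genuinely degenerate, forcing $\mu_{AB}^k\equiv1$ and zero total transfer, so the clean statement wants the mild restriction $0<\mu(B)<1$ (equivalently, $A$ and $B$ proper). Even then one must exclude the pathology $\mu_{AB}^k\in\{0,1\}$ for all $k$; I note that the mean ergodic theorem, $\tfrac1N\sum_{k=1}^N\mu(T^{-k}(B)\cap A)\to\mu(A)\mu(B)$, does not by itself rule this out, because an asymptotic density $\mu(B)$ of ``full-overlap'' times is consistent with that limit. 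For mixing $T$ the difficulty evaporates, since $\mu(T^{-k}(B)\cap A)\to\mu(A)\mu(B)\in(0,\mu(A))$ places the tail terms strictly inside $(0,1)$. I expect the intended argument simply identifies ``non-zero transfer'' with ``positive overlap'' and suppresses this endpoint; the careful version would close the gap under $0<\mu(B)<1$ with an orbit-itinerary argument showing that the full-overlap pattern cannot persist at every iterate.
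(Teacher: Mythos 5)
Your two implications are sound, and your route is genuinely different from --- and more careful than --- the paper's. The paper's proof is a two-line translation through the recurrence-form definition it states just above the theorem: ergodic $\Leftrightarrow$ some overlap $\mu(T^{-n}(A)\cap B)>0$ $\Leftrightarrow$ $\mu_{AB}^n>0$ $\Leftrightarrow$ nonzero transfer. In particular it never bridges the mismatch you noticed between ``some $n\in\mathbb{Z}$'' in that definition and the positive iterates appearing in the sum (\ref{IT_total}); your union-invariance trick with $W=\bigcup_{k\ge 1}T^{-k}(A)$ supplies exactly that missing step, and your contrapositive argument for the other direction works from the primary invariant-set definition rather than taking the recurrence characterization for granted. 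More importantly, your suspicion about the endpoint is confirmed: the paper's forward direction concludes ``$\mu_{AB}^n>0$, hence $[\T_{A\to B}^\mu]^n$ is non-zero,'' which is precisely the conflation of ``positive overlap'' with ``positive transfer'' that fails at $\mu_{AB}^n=1$. (The reverse direction is unaffected, since $-\mu\log\mu>0$ does force $0<\mu<1$.)

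The only place you are too generous is your closing conjecture that the gap can be closed under $0<\mu(B)<1$ by an orbit-itinerary argument. It cannot; the flaw is in the statement, not merely in the proof. Take $X=S^1\times\{0,1\}$ with the product of Lebesgue and uniform measure, and $T(y,i)=(y+\alpha,\,1-i)$ with $\alpha$ irrational. This is a measure-preserving rotation of the compact group $S^1\times\mathbb{Z}_2$, and it is ergodic because the powers of $(\alpha,1)$ are dense (both even and odd multiples of $\alpha$ are dense mod $1$). Yet for $A=B=S^1\times\{0\}$ one has $T^{-k}(B)\cap A=A$ for $k$ even and $T^{-k}(B)\cap A=\emptyset$ for $k$ odd, so $\mu_{AB}^k\in\{0,1\}$ for every $k\ge 1$ and $[\T_{AB}^\mu]_1^n=0$ for all $n$, even though $\mu(A)=\mu(B)=\tfrac12$ and the space is non-atomic. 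So the forward implication is false as stated: ergodicity alone cannot exclude the persistent full-overlap pattern, because ergodic systems may have periodic factors. The statement survives only if one strengthens the hypothesis to rule these out (e.g.\ weak mixing, under which $\mu(T^{-k}(B)\cap A)\to\mu(A)\mu(B)$ along a density-one set of times, placing the tail terms strictly inside $(0,1)$, as you observed for mixing), or redefines the transfer via a function vanishing only at $0$, or weakens the conclusion to ``$\mu_{AB}^n>0$ for some $n$'' --- which is all the paper's argument, and yours, actually establishes.
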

\begin{proof}
First assume that $T$ is ergodic. Hence, for any two sets $A,B\in{\cal B}$ with $\mu(A)>0$ and $\mu(B)>0$, there exists $n\in\mathbb{Z}$ such that 
\[\mu(T^{-n}(A)\cap B)>0.\]
Hence for that $n$, 
\[\mu_{AB}^n = \frac{\mu(T^{-n}(A)\cap B)}{\mu(A)}>0.\]
Hence, $[\T_{A\to B}^\mu]^n$ is non-zero and since the information transfer is always positive, the total information flow from $A$ to $B$, given by
\[[\T_{AB}^\mu]_1^n = \sum_{i = 1}^n [\T_{A\to B}^\mu]^i\]
is non-zero.

On the other hand, assume that for any two sets $A,B\in{\cal B}$, with $\mu(A)>0$ and $\mu(B)>0$ the total information transfer $[\T_{AB}^\mu]_1^n= \sum_{i = 1}^n [\T_{A\to B}^\mu]^i$ is non-zero. This implies that there exists at least one $n\in\mathbb{Z}$ such that $[\T_{A\to B}^\mu]^n$ is non-zero. Moreover, since each $[\T_{A\to B}^\mu]^i>0$, it implies there exists at least one $n$ such that $[\T_{A\to B}^\mu]^n>0$. For that $n$, since the information transfer is positive, $\mu_{AB}^n>0$. Hence,
\begin{eqnarray}\nonumber
\mu_{AB}^n &=& \frac{\mu(T^{-n}(A)\cap B)}{\mu(A)}>0\\
&\implies& \mu(T^{-n}(A)\cap B)>0.
\end{eqnarray}
Hence $T$ is ergodic.
\end{proof}

 Although a plethora of ergodic systems can be constructed or can be abstractly shown to exist, it is extremely difficult to verify ergodicity for naturally occurring systems. Hence, in many cases, ergodicity is proved by proving the system to be satisfying a ``stronger" condition called \emph{mixing}.

\begin{definition}[Mixing \cite{Lasota}]
 Let $(X,{\cal B},\mu)$ be a probability space and $T:X\to X$ a measure-preserving transformation. $T$ is called mixing if
\[\lim_{n\to \infty}\mu(T^{-n}(A)\cap B)=\mu(A)\mu(B),\forall A,B\in{\cal B}.\]
\end{definition}

As a direct consequence of the definition of mixing we have the following lemma which characterizes mixing using information transfer.

\begin{lemma}
Let $(X,{\cal B},\mu)$ be a probability space and $T:X\to X$ a non-singular transformation. Let $A,B\in{\cal B}$ such that $\mu(A)>0$ and $\mu(B)>0$. Then $T$ is mixing if and only if the $n$-step information transfer, with $n\to\infty$, from $A$ to $B$ is equal to the  entropy of $B$.
\end{lemma}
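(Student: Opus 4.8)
The plan is to reduce the whole statement to the asymptotics of the single scalar
\[
\mu_{AB}^n=\frac{\mu(T^{-n}(B)\cap A)}{\mu(A)},
\]
since by Definition \ref{def_IT_set_n_step} the $n$-step transfer is $[\T_{A\to B}^\mu]^n=\eta(\mu_{AB}^n)$, where $\eta(x)=-x\log x$ is exactly the continuous function already used in the proof of Theorem \ref{theorem_entropy_increase}. Because $T^{-n}(B)\cap A\subseteq A$ we have $\mu_{AB}^n\in[0,1]$, so the sequence lives in a compact interval on which $\eta$ is continuous. For the forward (``only if'') direction I would assume $T$ is mixing and apply the mixing identity with the roles of the two sets taken to be $B$ and $A$, giving $\mu(T^{-n}(B)\cap A)\to\mu(B)\mu(A)$, hence $\mu_{AB}^n\to\mu(B)$. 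Continuity of $\eta$ at the interior point $\mu(B)>0$ then yields
\[
\lim_{n\to\infty}[\T_{A\to B}^\mu]^n=\eta(\mu(B))=-\mu(B)\log\mu(B)=H_\mu(B),
\]
which is the claim. This direction is essentially immediate and is the ``direct consequence'' of mixing.

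The converse (``if'') direction is where the real work lies, and the main obstacle is that $\eta$ is \emph{not} injective on $[0,1]$: it increases on $[0,1/e]$ and decreases on $[1/e,1]$, so the bare hypothesis $\eta(\mu_{AB}^n)\to\eta(\mu(B))$ does not by itself force $\mu_{AB}^n\to\mu(B)$, since a subsequence could approach the conjugate value $c\neq\mu(B)$ with $\eta(c)=\eta(\mu(B))$. To remove this ambiguity I would use that the hypothesis holds for \emph{every} pair of positive-measure sets, and apply it to both $(A,B)$ and $(A,X\setminus B)$ (taking $0<\mu(B)<1$; the case $\mu(B)=1$ is trivial, as then $T^{-n}(B)\cap A=A$ up to null sets). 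Since $T^{-n}(B)$ and $T^{-n}(X\setminus B)$ partition $X$, we get the identity $\mu_{AB}^n+\mu_{A,X\setminus B}^n=1$ for every $n$, and the two hypotheses become $\eta(\mu_{AB}^n)\to\eta(\mu(B))$ and $\eta(1-\mu_{AB}^n)\to\eta(1-\mu(B))$.

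Adding these limits gives $h(\mu_{AB}^n)\to h(\mu(B))$, where $h(x)=\eta(x)+\eta(1-x)=-x\log x-(1-x)\log(1-x)$ is the binary entropy. Now take any subsequential limit $L$ of the bounded sequence $\mu_{AB}^n$. Since $h$ is strictly increasing on $[0,1/2]$ and strictly decreasing on $[1/2,1]$, the relation $h(L)=h(\mu(B))$ forces $L\in\{\mu(B),\,1-\mu(B)\}$; the surviving constraint $\eta(L)=\eta(\mu(B))$ then rules out $L=1-\mu(B)$ unless $\mu(B)=1/2$, because a short check shows $\eta(p)=\eta(1-p)$ holds on $(0,1)$ only at $p=1/2$ (the difference $\eta(p)-\eta(1-p)$ vanishes there alone). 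Hence every subsequential limit equals $\mu(B)$, so $\mu_{AB}^n\to\mu(B)$, i.e. $\mu(T^{-n}(B)\cap A)\to\mu(A)\mu(B)$. As $A,B$ were arbitrary with positive measure and the degenerate cases are immediate, this is exactly mixing, completing the converse.

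I expect the forward direction to be routine. The delicate point is the converse, and specifically the non-injectivity of $\eta$ on $[0,1]$: the complement trick above, together with the binary-entropy comparison, is the cleanest way I see to pin the limit down. It is worth flagging, since a proof that merely ``inverts'' $\eta$ while ignoring its two-to-one behaviour on $[0,1]$ would be incomplete.
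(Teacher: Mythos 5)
Your proposal is correct, and it does genuinely more than the paper's own proof, which disposes of the lemma in one line: the paper merely records the formula for $\mu_{AB}^n$ and the fact that mixing means $\lim_{n\to\infty}\mu(T^{-n}(A)\cap B)/\mu(A)=\mu(B)$, implicitly treating the equivalence between $\mu_{AB}^n\to\mu(B)$ and $\eta(\mu_{AB}^n)\to\eta(\mu(B))$, with $\eta(x)=-x\log x$, as obvious in both directions. The forward direction of your argument (continuity of $\eta$) is exactly what the paper intends. But for the converse the paper offers nothing beyond restating definitions, and that is precisely the step you flag: $\eta$ is two-to-one on $[0,1]$, so $\eta(\mu_{AB}^n)\to\eta(\mu(B))$ only constrains each subsequential limit $L$ to satisfy $\eta(L)=\eta(\mu(B))$, which does not pin down $L=\mu(B)$. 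Your repair --- invoking the hypothesis also for the pair $(A,X\setminus B)$, using $T^{-n}(X\setminus B)=X\setminus T^{-n}(B)$ to get $\mu_{A,X\setminus B}^n=1-\mu_{AB}^n$, summing the two limits to obtain the binary entropy $h(x)=\eta(x)+\eta(1-x)$, and then combining the strict unimodality of $h$ about $1/2$ with the fact that $\eta(p)=\eta(1-p)$ holds on $(0,1)$ only at $p=1/2$ --- is a sound and genuinely new ingredient, and your handling of the degenerate cases ($\mu(B)=1$, null sets) via non-singularity is also correct. What the paper's version buys is brevity; what yours buys is an actual proof of the ``if'' direction, under the (necessary, and clearly intended) reading that the hypothesis is quantified over all pairs of positive-measure sets, since a single pair cannot characterize mixing. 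Two incidental notes: your $\mu_{AB}^n=\mu(T^{-n}(B)\cap A)/\mu(A)$ is the form consistent with the paper's Definition of $n$-step transfer, whereas the paper's proof swaps $A$ and $B$ inside the preimage (harmless here, since the mixing condition is symmetric under relabeling); and the lemma's hypothesis of a non-singular $T$ versus the measure-preserving $T$ in the paper's definition of mixing is a mismatch in the paper itself, which your argument sidesteps by never needing measure preservation.
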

\begin{proof}
The proof is straight-forward and follows from the fact that
\[\mu_{AB}^n = \frac{\mu(T^{-n}(A)\cap B)}{\mu(A)}\]
and for $T$ to be mixing one should have
\[\lim_{n\to \infty}\frac{\mu(T^{-n}(A)\cap B)}{\mu(A)}=\mu(B),\forall A,B\in{\cal B}.\]
\end{proof}
As stated earlier, mixing is stronger condition than ergodicity and mixing implies ergodicity. This fact is can be understood intuitively using the concept of information transfer as follows:

For ergodicity, the necessary and sufficient condition is that there should be a non-zero information flow  from any set of positive measure to any other set of positive measure. It does not impose any constraint on the quantity of information transferred. However, for the transformation to be mixing, the information transfer should not only be non-zero but should also be equal to a fixed entropy, determined by the measure of the set to which information is flowing. Thus mixing is a stronger condition and mixing implies ergodicity. Hence, information transfer can be used to distinguish ergodic transformations from mixing.

\section{Finite-dimensional Approximation}\label{section_finite}
In this section, we develop the computational framework for computation of information transfer between the states in the phase space of a dynamical system. The computation is based on set-oriented methods for construction of finite dimensional approximation of the Perron-Frobenius operator associated with a dynamical system $(X,{\cal B},\mu , T)$ \cite{dellnitz2002set}. For construction of the finite-dimensional approximation of the P-F operator, consider a finite partition (${\cal X}$) of $X$ as 
\begin{eqnarray}\label{partition}
{\cal X} = \{D_1,D_2,\cdots , D_N\},
\end{eqnarray}
such that $\cup_i D_i = X, D_i\cap D_j = \{\phi\}$ for $i\neq j$. Then the finite-dimensional P-F operator on the partition $\cal X$ is a $N\times N$ matrix $P$ such that
\begin{eqnarray}\label{PF_finite}
[P]_{ij}= \frac{m(T^{-1}(D_j)\cap D_i)}{m(D_i)},
\end{eqnarray}
where $[P]_{ij}$ is the $ij^{th}$ entry of the finite-dimensional $P$ matrix and $m$ is the Lebesgue measure. Computationally several short-term trajectories are used to compute the matrix $P$. In particular, the map $T$ is used to propagate $L$ ``initial conditions", which are chosen to be uniformly distributed over each cell $D_i$ and the entry $[P]_{ij}$ is approximated by the fraction of the initial conditions in $D_i$ that end up in cell $D_j$ after one iterate of the map $T$.

With the finite-dimensional P-F operator, the computation of information transfer between any two sets $D_i,D_j\in {\cal X}$ is straight-forward. In particular, from (\ref{mu_ij}) and (\ref{PF_finite}) we have $\mu_{ij} = [P]_{ij}$. Hence the information transfer from cell $D_i$ to cell $D_j$, with respect to the partition $\cal X$, is 
\begin{eqnarray}\label{IT_finite}
[\T_{D_i\to D_j}]= -[P]_{ij}\log [P]_{ij}. 
\end{eqnarray}
The expression in (\ref{IT_finite}) is the one-step information transfer and higher order transfers can be obtained similarly by considering powers of the finite-dimensional $P$ matrix. In particular, the information transferred from cell $D_i$ to cell $D_j$ under $n$ iterates of the map $T$ is
\begin{eqnarray}\label{IT_n_step_finite}
[\T_{D_i\to D_j}]^n = -[P^n]_{ij}\log [P^n]_{ij}
\end{eqnarray}
where 
\begin{eqnarray}
[P^n]_{ij}= \frac{m(T^{-n}(D_j)\cap D_i)}{m(D_i)}
\end{eqnarray}
is the $ij^{th}$ entry of the matrix $P^n$.

\section{Optimal Placement of Actuators and Sensors}\label{section_optimal}
\subsection{Problem Formulation}
Motivated by the problem of control of oil spill in fluid flow, we consider the problem of optimal placement of static actuators and sensors. The spatial distribution of the oil spill is modelled as a scalar density function $\rho(x,t)\in {\cal L}_2(X)$ which is advected by the differential equation
\[\dot{x} = f(x)\]
where $x\in X\subset \mathbb{R}^N$ and $X$ is assumed compact. The evolution of the density $\rho(x,t)$ is given by the linear advection PDE
\begin{eqnarray}\label{advection_pde}
\frac{\partial \rho(x,t)}{\partial t}=\nabla \cdot (f(x)\rho(x,t)).
\end{eqnarray} 
Let $A_k\subset X$, $k = 1, \cdots , p$ and $S_l\subset X$, $l = 1, \cdots , q$ be the locations of the actuators and sensors.
\begin{assumption}
We assume $0< m(A_k)\ll m(X)$ and $0 < m(S_l) \ll m(X)$, where $m$ is the Lebesgue measure.
\end{assumption}
The controlled evolution of linear advection PDE (\ref{advection_pde}) with spatially located actuators and sensors can be described as follows:
{\small
\begin{eqnarray}
\begin{aligned}
&\frac{\partial \rho(x,t)}{\partial t}=-\nabla \cdot (f (x)\rho(x,t))+\sum_{k=1}^p \chi_{A_k}(x)u_k(x,t)\\
&y_\ell(x,t)= \chi_{S_\ell}(x)\rho(x,t),\;\;\;k=1,\ldots, p,\;\;\;\ell=1,\ldots,q
\label{control_pde}
\end{aligned}
\end{eqnarray}
}
 where $\chi_{A_k}(x)$ and $\chi_{S_\ell}(x)$ are the indicator functions for the set $A_k$ and $S_\ell$ respectively.  $u_k(x,t)$ is the control input for the $k^{th}$ actuator and $y_\ell$ is the output of the $\ell^{th}$ sensor.

The objective is to provide an approach for optimal locations of $A_k$ and $S_l$, such that certain controllability and observability conditions are satisfied for the controlled PDE (\ref{control_pde}).

\subsection{Controllability and Observability of Controlled PDE}
In this subsection, we briefly state the main results about controllability and observability of the controlled PDE (\ref{control_pde}). For details see \cite{Vaidya_gramian_journal,vaidya_optimalplacement_ecc,sinha_optimal_placement}. 

In \cite{vaidya_optimalplacement_ecc, sinha_optimal_placement}, the authors had provided results connecting the flow of the underlying vector field and infinite-time controllability and observability of the controlled PDE (\ref{control_pde}). The characterization is in terms of the P-F and the Koopman operators and hence for implementation of the results we consider the finite-dimensional approximation of these operators on a partition of the space $X$.
\begin{remark}
For computation of the finite-dimensional P-F operator, the continuous time system $\dot{x}=f(x)$ is discretized using Euler discretization.
\end{remark}


The finite dimensional approximation of the P-F operator leads to a  coarser notion of controllability which we refer to as {\it coarse controllability}. The definition of coarse controllability closely follows that of coarse stability as introduced in \cite{Vaidya_TAC}.

\begin{definition}[Coarse Controllability \cite{sinha_optimal_placement}]\label{definition_coarse}
Consider a space $X$ with a finite partition $\mathcal{X}$. $X$ is said to be coarsely controllable with respect to the partition $\mathcal{X}$ if for an uncontrollable subset $A\subset U \subset X$, there exists no subpartition $S = \{D_{s_1}, D_{s_2},\cdots, D_{s_l}\}$ in $\mathcal{X}$ with domain $S = \cup_{k=1}^lD_{s_k}$, such that $A\subset S$ and $S$ is an uncontrollable subset of $X$.
\end{definition}

\begin{definition}[Coarse Controllability from $D_k$ \cite{sinha_optimal_placement}]\label{def_coarse_controllable_Dk}
Let $\cal X$ be the partition of the space and $D_k\in \cal X$. The system is said to be coarse controllable from cell $D_k$ if all the cells of the partition $\cal X$ are reachable from $D_k$.
\end{definition}
For typical partitions, coarse controllability means controllability modulo the uncontrollable sets $A$, with the uncontrollable sets smaller than the size of the cells within the partitions. When the cell sizes (measure) goes to zero, one recovers the uncontrollable sets fully. Without loss of generality, we make following assumption on the location of the actuators and sensors.
\begin{assumption}\label{assumption_location}
\[A_{i}=D_{k_i},\;\;\; i=1,\ldots, p \;\;{\rm where}\;\;k_i\in N_a.\]
where, $N_a\subseteq \{1,\ldots,N\}$ is the admissible set and it consists of the admissible locations where the actuators can be placed. The assumption essentially implies that the actuator occupies the entire cell of the partition $\cal X$. A similar assumption can also be made on the location of the sensor.
\end{assumption}
With this, we have the following characterization of the controllable region in terms of the finite-dimensional P-F operator $P$.
\begin{theorem}\label{theorem_coarse_controllable}
Let $P$ be the finite-dimensional approximation of the P-F semigroup constructed on  the finite partition $\mathcal{X} = \{D_1,D_2,\cdots , D_N\}$ of $X$. Let $A_i=D_{k_i}$, $i = 1,2, \cdots , p$ correspond to the location of actuator cells. The following are true.
\begin{enumerate}
\item If the entire space $X$ is controllable with actuator located on cells $D_{k_i}$ for $i=1,\ldots, p$ i.e.,
    \[\mu_c(B)=\sum_{n=0}^\infty \alpha^n \left[\mathbb{P}^n \sum_{i=1}^p m_{D_{k_i}} \right](B)\neq 0 \]
    for every set $B\subset X$ with $m(B)>0$, then the vector $\bar \mu=(\bar \mu_1,\ldots,\bar \mu_N) \in \mathbb{R}^N$ is nonzero i.e., $\bar \mu_k\neq 0$ for $k=1,\ldots,N$, where
    \begin{eqnarray}\bar \mu=\sum_{n=0^\infty}\alpha^n \sum_{i=1}^p e_{k_i}^\top P^n\label{equation1}
    \end{eqnarray}
    and $e_{k_i}\in \mathbb{R}^N$ is a column vector consisting of all zeros with $1$ at $k_i^{th}$ location.
 \item If $\bar \mu$ as defined in Eq. (\ref{equation1}) is nonzero (element wise) then then system is coarse controllable.
   \end{enumerate}
\end{theorem}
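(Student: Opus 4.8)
The plan is to prove both implications through a single bridging identity between the infinite-dimensional controllability measure $\mu_c$ and the finite-dimensional vector $\bar\mu$, after which everything reduces to the non-negativity of the summands. The key computation I would carry out first is to evaluate $\mu_c$ on a single cell $D_j$. Since the P-F operator on measures acts by $[\mathbb{P}^n\nu](A)=\nu(T^{-n}(A))$ and $m_{D_{k_i}}$ denotes Lebesgue measure restricted to the actuator cell (so $m_{D_{k_i}}(A)=m(A\cap D_{k_i})$), I get
\[ [\mathbb{P}^n m_{D_{k_i}}](D_j)=m(T^{-n}(D_j)\cap D_{k_i})=m(D_{k_i})\,[P^n]_{k_i j}, \]
using the definition $[P^n]_{ij}=m(T^{-n}(D_j)\cap D_i)/m(D_i)$ from Section \ref{section_finite}. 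Summing over the $p$ actuator cells and over the discounted iterates gives $\mu_c(D_j)=\sum_{n=0}^\infty\alpha^n\sum_{i=1}^p m(D_{k_i})[P^n]_{k_i j}$, whereas by definition $\bar\mu_j=\sum_{n=0}^\infty\alpha^n\sum_{i=1}^p[P^n]_{k_i j}$; the two differ only by the strictly positive weights $m(D_{k_i})$ sitting inside the sum.

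For the first statement, I would note that every summand is non-negative: each $[P^n]_{k_i j}=m(T^{-n}(D_j)\cap D_{k_i})/m(D_{k_i})$ is a measure of a set divided by a positive number, and $\alpha>0$. A sum of non-negative terms vanishes if and only if each term does, and multiplying each term by the positive weight $m(D_{k_i})$ cannot change whether the total is zero; hence $\mu_c(D_j)=0$ if and only if $\bar\mu_j=0$. Controllability of $X$ supplies $\mu_c(B)\neq0$ for every $B$ with $m(B)>0$; applying this to each cell $D_j$ (which has positive measure) yields $\mu_c(D_j)\neq0$, so $\bar\mu_j\neq0$ for all $j$, i.e. $\bar\mu$ is elementwise nonzero.

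For the second statement, I would run the same non-negativity argument in reverse: $\bar\mu_j\neq0$ forces some summand to be strictly positive, so there exist an iterate $n$ and an actuator index $i$ with $[P^n]_{k_i j}>0$, equivalently $m(T^{-n}(D_j)\cap D_{k_i})>0$. This says that a positive-measure portion of the actuator cell $D_{k_i}$ reaches $D_j$ in $n$ steps, so $D_j$ is reachable from the actuators; repeating this for every $j$ shows all cells of $\mathcal{X}$ are reachable from the actuator cells, which by Definition \ref{def_coarse_controllable_Dk} is exactly coarse controllability.

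The main obstacle is the bridging identity itself, and in particular the tacit assumption that the matrix power $P^n$ equals the finite-dimensional approximation of $\mathbb{P}^n$ with entries $m(T^{-n}(D_j)\cap D_i)/m(D_i)$. This is exact only for Markov partitions (where $T$ maps cells to unions of cells) and is adopted as a definition in Section \ref{section_finite}, so I would invoke it explicitly rather than re-derive it. Once that identity is in hand, both directions collapse to the elementary fact that a sum of non-negative numbers is nonzero precisely when one of them is positive, a property invariant under the positive rescaling by $m(D_{k_i})$, and the remaining work is purely bookkeeping.
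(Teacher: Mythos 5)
Your proof is sound in substance, but there is nothing in the paper to compare it against: the paper's entire ``proof'' of Theorem \ref{theorem_coarse_controllable} is the single line ``See \cite{sinha_optimal_placement}'' --- the argument is deferred wholesale to the cited reference. Your bridging identity $[\mathbb{P}^n m_{D_{k_i}}](D_j) = m(T^{-n}(D_j)\cap D_{k_i}) = m(D_{k_i})\,[P^n]_{k_i j}$ is exactly the right link between the measure $\mu_c$ and the vector $\bar\mu$, and your reduction of both implications to the fact that a sum of non-negative terms vanishes precisely when every term does (a property unaffected by the strictly positive weights $m(D_{k_i})$) is the natural, and almost certainly the intended, argument. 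You were also right to flag that identifying the matrix power $[P^n]_{ij}$ with $m(T^{-n}(D_j)\cap D_i)/m(D_i)$ is exact only in special cases (e.g.\ Markov partitions) and is otherwise a convention the paper adopts in Section \ref{section_finite}; invoking it rather than re-deriving it is the only honest option. Two small points to tighten. First, your part-1 argument needs every cell $D_j$ to have positive Lebesgue measure so that the controllability hypothesis applies to $B=D_j$; this holds for any sensible box partition but should be stated. Second, for part 2 you appeal to Definition \ref{def_coarse_controllable_Dk}, which is literally about reachability of \emph{all} cells from a \emph{single} cell $D_k$, whereas what you establish is that each cell is reachable from \emph{some} actuator cell. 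The cleaner closing move is via Definition \ref{definition_coarse}: since every cell receives strictly positive discounted transfer from the actuator set, no union of cells $S=\cup_k D_{s_k}$ can be an uncontrollable subpartition, hence no uncontrollable set $A$ can be enclosed in one, which is coarse controllability verbatim.
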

\begin{proof}
See \cite{sinha_optimal_placement}.
\end{proof}
Essentially, this theorem proves that if the actuators are located in the cells $D_{k_i}$, then under the flow all the other cells are reachable from these actuator cells. Similar results exists for sensor placement \cite{sinha_optimal_placement} where if the sensors are locates in cells $D_i$, then these cells are reachable from all the other cells. For results on coarse observability, see \cite{sinha_optimal_placement}.

\subsection{Information Transfer and Optimal Placement}
Let ${\cal X} = \{D_1, \cdots , D_N\}$ be a finite partition of the space $X$. Then from (\ref{IT_n_step_finite}), it can be seen that the cell $D_j$ is reachable from cell $D_i$ if and only if there exists a $n$ such that the $n$-step information transfer is non-zero. Note that if the $N-1$-step information transfer from $D_i$ to $D_j$ is zero, then for all $n>N-1$ the $n$-step information transfer remains zero. This follows from the fact that the finite partition can be thought of as a graph of $N$ nodes and if one node $(D_j)$ is not reachable from some other node $(D_i)$ in steps less than or equal to $N-1$, then $D_j$ is not reachable from $D_i$ at all. See \cite{sinha_optimal_placement} for a detailed discussion on this. 

Hence, the problem of optimal placement of actuators reduces to finding the minimum number of cells in the partition, such that there is non-zero information transfer from these cells to all the other cells. The optimization problem can be formulated in many different ways and we discuss the very obvious formulation. For other formulations see \cite{vaidya_optimalplacement_ecc,sinha_optimal_placement}. Let $e\in\mathbb{R}^{1\times N}$ be a vector of length $N$ with entries zero and one. Then one can write the optimal placement problem as
\begin{eqnarray}\label{optimization_problem}
\begin{aligned}
&\qquad\textnormal{ min } \quad||e||_0\\
&\textnormal{subject to } \quad e\T > 0\\
& \quad\qquad\quad e_i\in\{0,1\}
\end{aligned}
\end{eqnarray}
where $\T$ is the information transfer matrix such that $[\T]_{ij}$ is the total information transfer from cell $D_i$ to cell $D_j$ in $N-1$ time steps. However, the optimization problem (\ref{optimization_problem}) is non-convex. A relaxed version of (\ref{optimization_problem}) is a convex problem and the optimization problem can be stated as
\begin{eqnarray}\label{optimization_convex}
\begin{aligned}
&\qquad\textnormal{ min } \quad||e||_1\\
&\textnormal{subject to } \quad e\T > 0\\
& \quad\qquad\quad e_i\in [0,1], \sum_i e_i = p,
\end{aligned}
\end{eqnarray}
where $p$ is the number of actuators. Note that the relaxed problem provides a sub-optimal solution and the optimal solution $e^{\star}$, in general, will not be binary and one can provide a sub-optimal solution to the actuator placement problem by considering the largest $p$ values of $e^{\star}$ and place the actuators in those cells. The relaxed optimization places the given number of actuators in those cells such that the non-zero information transfer region is maximized, thus maximizing the coarse controllable region.

For the sensor placement problem the optimization problem, the objective is to choose the minimum number of cells such that information transfer from each of the other cells to these cells is non-zero, thus achieving coarse observability. Hence the relaxed optimization problem is
\begin{eqnarray*}
&&\qquad\textnormal{ min } \quad||e||_1\\
&&\textnormal{subject to } \quad \T^\top e^\top > 0\\
&& \quad\qquad\quad e_i\in [0,1], \sum_i e_i = q
\end{eqnarray*}
where $q$ is the number of sensors. 
\subsection{Simulation Results}
\emph{1. Control of contaminants in Double Gyre flow field}
\newline
We consider the double gyre fluid flow field given by
\[\dot x=-\pi \sin(\pi x)\cos(\pi y),\;\;\;
\dot y =\pi \sin(\pi y)\cos(\pi x)\]
where $(x,y)\in X=[0,2]\times [0,1]$ (Fig. \ref{fig_doublegyre}). The Double Gyre flow field is often used as a model for oceanographic flow field \cite{Wiggins_mixing_ocean} and the interest in this example comes from the problem of control of contaminants in oceanographic flows \cite{Igor_oilspill,haller_oilspill}. 
\begin{figure}[htp!]
\centering
{\includegraphics[scale=.325]{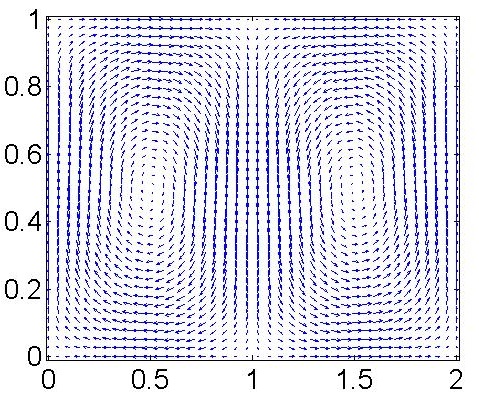}}
\caption{Double gyre velocity field }\label{fig_doublegyre}
\end{figure}

For the finite-dimensional approximation we consider a $60\times 60$ partition of the state space and we choose a total of 20 actuators.

\begin{figure}[htp!]
\centering
\subfigure[]{\includegraphics[scale=.225]{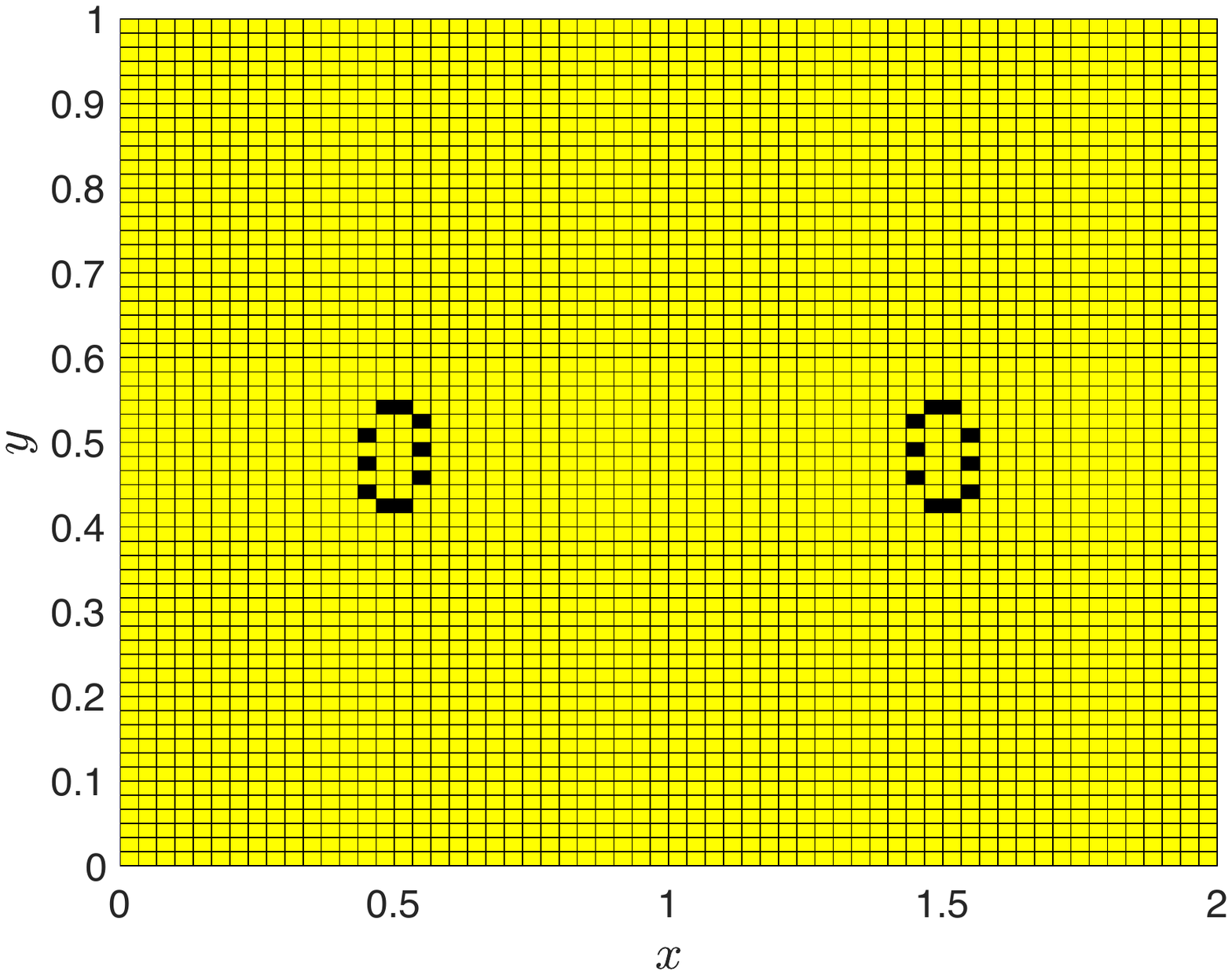}}
\subfigure[]{\includegraphics[scale=.23]{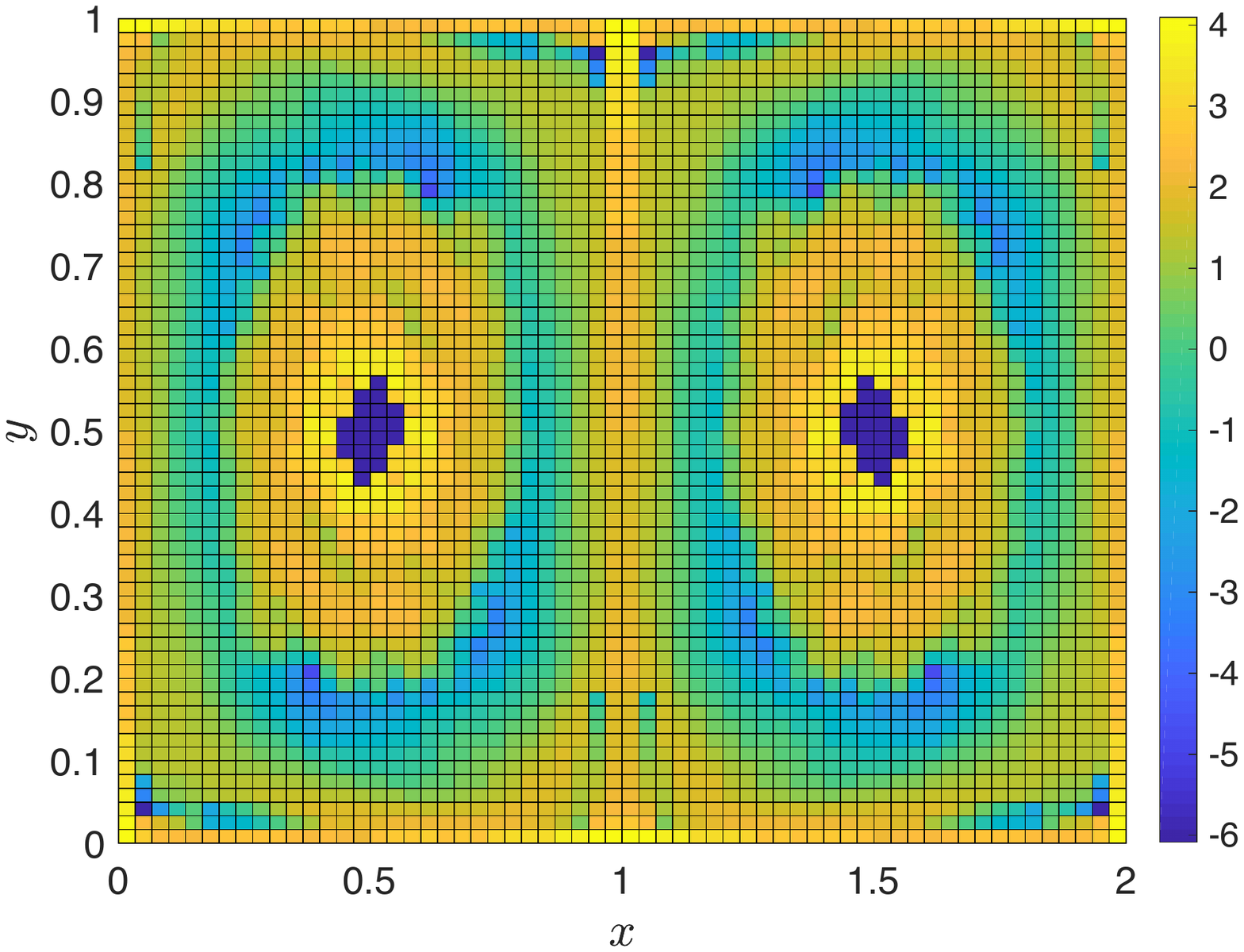}}
\caption{(a) Position of the 20 actuators. (b) Total information transfer from the actuator cells to all the cells in the state space in $\log$ scale.}\label{fig_actuators}
\end{figure}
The position of the actuators, obtained using the optimization problem (\ref{optimization_convex}) is shown in Fig. \ref{fig_actuators}(a) and the information transfer to all the cells from these 20 actuator cells are shown in Fig. \ref{fig_actuators}(b). It should be noted that the information transfer is not non-zero to all the cells from these 20 cells and thus to achieve full coarse controllability, one would require more actuators.

\emph{2. Sensor placement in building systems}
\newline
The vector field used in this example is the average velocity field obtained from a detailed finite element-based simulation of Navier Stokes equation. For the purpose of simulation, we only employ a two dimensional slice of the three dimensional velocity field as shown in Fig. \ref{velocity_field}. The dimensions of the room are as follows: $0\leq x \leq  1.52 m $ and  $0 \leq y \leq 1.68 m$. The order of magnitude
for the velocity field is $O(1)$. The Reynolds number of the flow is $Re = 76 725$ and the Prandtl number $Pr = 0.729$.
\begin{figure}[htp!]
\centering
{\includegraphics[scale=.32]{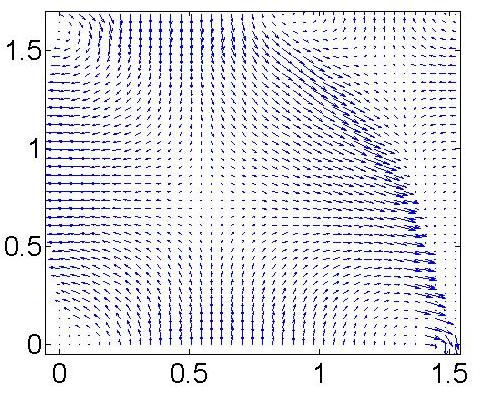}}
\caption{Fluid flow vector field in building.}\label{velocity_field}
\end{figure}

For the finite-dimensional approximation of the P-F operator we divide the state space into $60\times 60$ divisions and choose 6 sensors for maximizing the coarse observable region.

\begin{figure}[htp!]
\centering
\subfigure[]{\includegraphics[scale=.225]{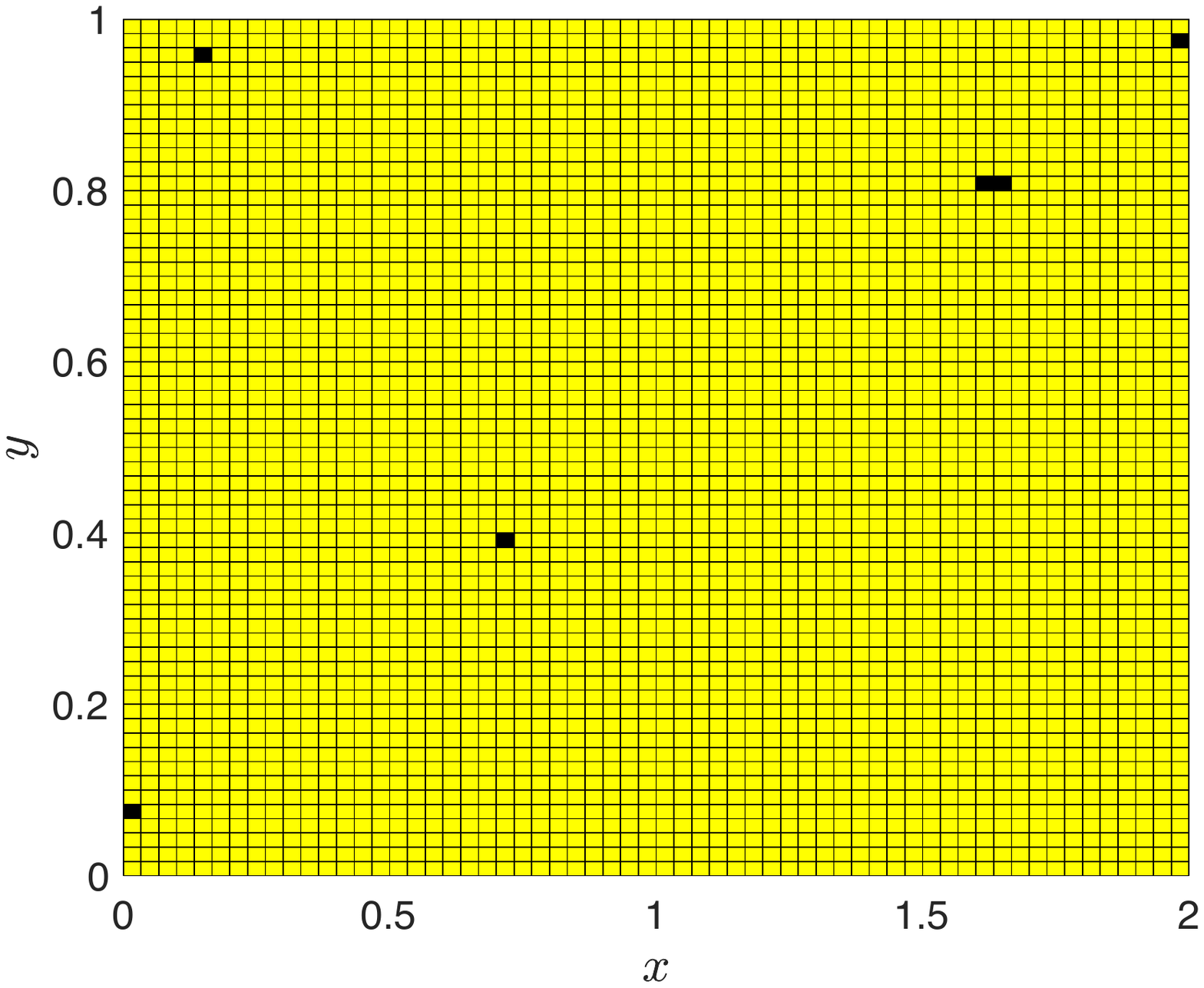}}
\subfigure[]{\includegraphics[scale=.225]{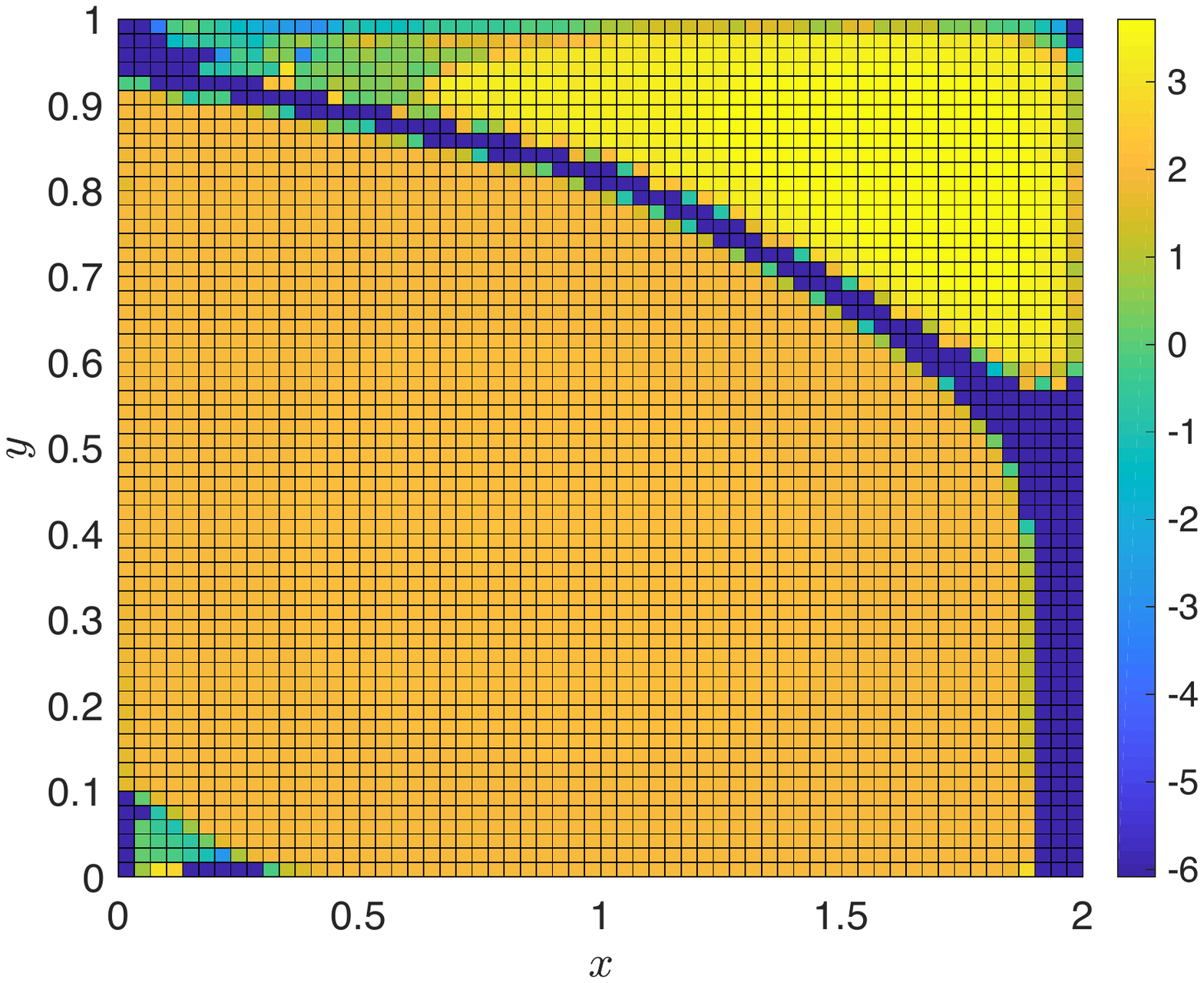}}
\caption{(a) Position of the 6 sensors. (b) Total information transfer to the sensor cells from all the cells in the state space in $\log$ scale.}\label{fig_sensors}
\end{figure}

In Fig. \ref{fig_sensors}(a) the position of the 6 sensors are shown (black boxes). The corresponding coarse observable region is shown in Fig. \ref{fig_sensors}(b) where the information transfer to the sensor cells from all the other cells are plotted in log scale. It is observed that most of the state space is observable with the 6 sensors, with the separatrix not being observed. This is because the separatrix divides the state space into invariant regions and the 6 sensors are placed in each of these invariant subspaces. Hence these sensors can \emph{observe} only the corresponding invariant subspace and cannot observe the separatrix and hence very little information flows from the separatirx to the sensors. This is indicated by the blue region in Fig. \ref{fig_sensors}(b). Ideally there should not be any information flow from the separatrix to the sensors, however in the simulations to to finite partitions, there is very small non-zero information flow from the separatrix to the sensors.

\section{Conclusions}\label{section_conclusion}
In this paper we extend the concept of information transfer between the states in a dynamical system to define information transfer between the measureable sets of a dynamical system and we show the defined information transfer measure satisfy the intuitions of information flow, namely, zero transfer and transfer asymmetry. We further show how information transfer is connected with the dynamical system concepts of ergodicity and mixing and provide necessary and sufficient conditions in terms of information transfer measure for a system to be ergodic or mixing. Finally, we formulate a convex optimization problem in terms of information transfer to address the problem of optimal placement of actuators and sensors for control of non-equilibrium dynamics and demonstrate the efficiency of the developed framework on two different systems.

\bibliographystyle{IEEEtran}
\bibliography{ref,ref1,reference}

\end{document}